\newtheorem{prop}{Property}
\begin{document}
\title{A Practical Algorithm for Max-Norm Optimal Binary Labeling of Graphs }
\titlerunning{A Practical Algorithm for Max-Norm Optimization}
%
\author{Filip Malmberg\inst{1}, Alexandre X. Falc\~{a}o\inst{2}}
\authorrunning{F. Malmberg, A. X. Falc\~{a}o}
%
\institute{Centre for Image Analysis, Dept. of Information Technology, Uppsala University, Sweden\\ \email{filip.malmberg@it.uu.se} \and Institute of Computing, University of Campinas, Brazil\\ \email{afalcao@ic.unicamp.br}}

\maketitle              
\begin{abstract}
This paper concerns the efficient implementation of a method for optimal binary labeling of graph vertices, originally proposed by Malmberg and Ciesielski (2020). This method finds, in quadratic time with respect to graph size, a labeling that globally minimizes an objective function based on the $L_\infty$-norm. The method enables global optimization for a novel class of optimization problems, with high relevance in application areas such as image processing and computer vision. In the original formulation, the Malmberg-Ciesielski algorithm is unfortunately very computationally expensive, limiting its utility in practical applications. Here, we present a modified version of the algorithm that exploits redundancies in the original method to reduce computation time. While our proposed method has the same theoretical asymptotic time complexity, we demonstrate that is substantially more efficient in practice. Even for small problems, we observe a speedup of 4-5 orders of magnitude. This reduction in computation time makes the Malmberg-Ciesielski method a viable option for many practical applications. 

\keywords{Graph labeling \and Combinatorial optimization \and Lexicographic Max-Ordering.}
\end{abstract}
\section{Introduction}
Many problems in computer science and pattern recognition can be as finding vertex labeling of a graph, such that the labeling optimizes some application-motivated objective function. In their recent work, Malmberg and Ciesielski~\cite{malmberg2020two} proposed a quadratic time algorithm for assigning binary labels to the vertices of a graph, such that the resulting labeling is optimal according to an objective function based on the max-norm, or $L_\infty$ norm. Here, we consider the efficient implementation of the algorithm proposed by Malmberg and Ciesielski. We present a version of their algorithm that, while having the same quadratic asymptotic time complexity, is orders of magnitude faster in practice. 

A key part of the Malmberg-Ciesielski algorithm is to solve a sequence of \emph{Boolean 2-satisfiability} (2-SAT) problems. Malmberg and Ciesielski observe that each such 2-SAT problem can be solved in linear time using, e.g., Aspvall's algorithm~\cite{aspvall1979linear}. They also observe, however, that there is a high degree of similarity between each consecutive 2-SAT problem in the sequence and that solving each 2-SAT problem in isolation thus appears inefficient. Here, we show that this redundancy between subsequent 2-SAT problems can indeed be exploited to formulate a substantially more efficient version of the algorithm.

\section{Background and motivation}
\label{sec:background}
We consider the problem of assigning a binary label ($0$ or $1$) to a set of variables identified by indices $1,\ldots,n$. A canonical problem is to find a binary labeling $\ell: [1,n]\rightarrow \{0,1\}$ that minimizes an objective function of the form

\begin{equation}
\label{eq:pnorm}
E_p(\ell) : = \sum_{i } \phi_i^p(\ell(i))+\sum_{(i,j) \in \mathcal{N}} \phi^p_{ij}(\ell(i),\ell(j)) ,
\end{equation}

\noindent where $\ell(i) \in \{0,1\}$ denotes the label of variable $i$ and $\mathcal{N}$ is a set of pairs of variables that are considered \emph{adjacent}. 

The functions $\phi_i(\cdot)$ are referred to as \emph{unary} terms. Each unary term depends only on the value of a single binary variable, and they are used to indicate the preference of an individual variable to be assigned each particular label. 

The functions $\phi_{ij}(\cdot,\cdot)$ are referred to as \emph{pairwise} terms. Each pairwise term depends on the labels assigned to two variables simultaneously, and thus introduces a dependency between the labels assigned to the variables. Typically, this dependency between variables is used to express that the desired solution should have some degree of smoothness, or regularity.

As established by  Kolmogorov and Zabih~\cite{kolmogorov2004energy}, the labeling problem described above can be solved to global optimality under the condition that all pairwise terms are submodular, which in the form presented here means that they must satisfy the inequality

\begin{equation}
\label{eq:p_submodularity}
\phi_{ij}^p(0,0)+\phi_{ij}^p(1,1) \leq \phi_{ij}^p(0,1)+\phi_{ij}^p(1,0).
\end{equation}

If the problem contains non-submodular binary terms, finding a globally optimal labeling is known to be NP-hard in the general case~\cite{kolmogorov2004energy}. Practitioners looking to solve such optimization problems must therefore first verify that their local cost functional satisfies the appropriate submodularity conditions. If this is not the case, they must resort to approximate optimization methods that may or may not produce satisfactory results for a given problem instance~\cite{kolmogorov2007minimizing}. Recently, however, Malmberg and Ciesielski~\cite{malmberg2020two} showed that in the limit case, as $p$ approaches to infinity, the requirement for submodularity disappears!  To characterize the labelings that minimize~\ref{eq:pnorm} as $p$ goes to infinity, we first observe that as $p$ goes to infinity the objective function $E_p$  itself converges to 

\begin{equation}
\label{eq:maxnorm}
E_\infty(\ell) : =  \max\bigl\{\max_{i} \phi_i(\ell(i)), 
\max_{( i,j) \in \mathcal{N} } \phi_{ij}(\ell(i),\ell(j))\bigr\}.\!\!\!\!\!
\end{equation}

\noindent i.e., the objective function becomes the max-norm of the vector containing all unary and pairwise terms.  A more refined way of characterizing the solution is the framework of \emph{lexicographic max-ordering} (Lex-MO)~\cite{ehrgott1995lexicographic,Ehrgott1999,ehrgott2005multicriteria}. The same concept was also studied by Levi and Zorin, who used the term \emph{strict minimizers}~\cite{levi2014strict}.  In this framework, two solutions are compared by
ordering all elements (in our case, the values of all unary and pairwise terms for a given solution) non-increasingly and then performing their lexicographical comparison. This avoids the potential drawback of the $E_\infty$ objective function, that it does not distinguish between solutions with high or low errors below the maximum error. The Malmberg-Ciesielski algorithm~\cite{malmberg2020two} computes, in polynomial time, a labeling that globally minimizes $E_\infty$, even in the presence of non-submodular pairwise terms. Under certain conditions, the same algorithm is also guaranteed to produce a solution that is optimal in the Lex-MO sense.  

\section{Preliminaries}
\label{sec:preliminaries}
In this section, we recall briefly the Malmberg-Ciesielski algorithm, along with some concepts needed for exposition of our proposed efficient implementation of this algorithm in Section~\ref{sec:method}.

\subsection{Boolean 2-satisfiability} 
We start by recalling the Boolean 2-satisfiability (\emph{2-SAT}) problem. Given a set of Boolean variables $\{x_1, \ldots, x_n\}$, $x_i \in \{0,1\}$ and a set of logical constraints on pairs of these variables, the 2-SAT problem consists of determining whether it is possible to assign values to the variables so that all the constraints are satisfied (and to find such an assignment, if it exists). To formally define the 2-SAT problem, we say that a \emph{literal} is either a Boolean variable $x$ or its negation $\neg x$. A 2-SAT problem can then be defined in terms of a Boolean expression that is a conjunction of \emph{clauses}, where each clause is a disjunction of two literals. Expressions on this form are known as 2-CNF formulas, where CNF stands for \emph{conjunctive normal form}. The 2-SAT problem consists of determining if there exists a truth assignment to the variables involved in a given 2-CNF formula that makes the whole formula true. If such an assignment exists, the 2-SAT problem is said to be \emph{satisfiable}, otherwise it is \emph{unsatisfiable}. As an example, the following expression is a 2-CNF formula involving three variables $x_1, x_2, x_3$, and two clauses:

\begin{equation}
(x_1 \lor x_2)\land(x_2\lor \neg x_3)
\end{equation}

\noindent This example formula evaluates to \emph{true} if we, e.g., assign all three variables the value $1$ (or \emph{true}). Thus the 2-SAT problem represented by this 2-CNF formula is satisfiable. 

For any 2-CNF formula, the 2-SAT problem is solvable in linear time w.r.t to the number of clauses\footnote{This is in contrast to the general Boolean satisfiability problem, where clauses are allowed to contain more than two literals. Already the 3SAT problem, where each clause can have at most three literals, is NP-hard. } using, e.g., Aspvall's algorithm~\cite{aspvall1979linear}.

We now introduce some further notions related to 2-SAT problems needed for our exposition, using the convention that $x_i$ and $\neg x_i$ denote literals, while $v_i$ denotes a literal whose truth value is unknown and $\bar{v_i}$ is its complementing literal. 

Every clause $(v_i \lor v_j)$ in a 2-CNF formula is logically equivalent to an implication from one of its variables to the other: 

\begin{equation}
 (v_i \lor v_j) \equiv (\bar{v_i} \Rightarrow v_j) \equiv (\bar{v_j} \Rightarrow v_i) \;.
\end{equation}

\noindent As established by Aspvall et al.~\cite{aspvall1979linear}, this means that every 2-SAT problem $F$ can be associated with an \emph{implication graph} $G_F=(V,E)$, a directed graph with vertices $V$ and edges $E$ constructed as follows:

\begin{enumerate}
\item For each variable $x_i$, we add two vertices named $x_i$ and $\neg{x_i}$ to $G_F$. The vertices  $x_i$ and $\neg{x_i}$ are said to be \emph{complementing}.
\item For each clause $(v_i \lor v_j)$ of $F$, we add edges $(\bar{v_i}, v_j)$ and $(\bar{v_i}, v_j)$ to $G_F$. 
\end{enumerate} 

\noindent Each vertex in the implication graph can thus be uniquely identified with a literal, and each edge identified with an implication from one literal to another. We will therefore sometimes interchangeably refer to a vertex in the implication graph by its corresponding literal $v_i$. For a given truth assignment, we say that a vertex in the implication graph \emph{agrees} with the assignment if the corresponding literal evaluates to \emph{true} in the assignment. The implication graph $G_F$ is \emph{skew symmetric} in the sense that if $(v_i, v_j)$  is an edge in $G_F$, then $(\bar{v_i}, \bar{v_j})$ is also an edge in $G_F$. We observe that it follows that for every path $\pi=(v_1, v_2, \ldots, v_k)$ in $G_F$, the path $\bar\pi=(\bar{v}_k, \bar{v}_{k-1}, \ldots , \bar {v}_1)$ is also a path in $G_F$.

In proving the correctness of our proposed algorithm, we will rely on the following property which is due to Aspvall et al.~\cite{aspvall1979linear}:
\begin{prop}
\label{prop1}
A given truth assignment satisfies a formula $F$ if and only if there is no vertex in $G_F$ for which the corresponding literal agrees with the assignment, with an outgoing edge to a vertex not agreeing with the assignment.
\end{prop}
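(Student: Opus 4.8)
The plan is to reduce the global statement about the formula $F$ to a local, clause-by-clause argument, using the implication equivalence $(v_i \lor v_j) \equiv (\bar{v_i} \Rightarrow v_j) \equiv (\bar{v_j} \Rightarrow v_i)$ already recorded above. First I would restate the condition appearing in the proposition in purely edge-theoretic terms: asserting that no agreeing vertex has an outgoing edge to a non-agreeing vertex is precisely the statement that no edge $(a,b)$ of $G_F$ is \emph{violated}, where I call an edge violated when its tail $a$ agrees with the assignment (its literal is true) while its head $b$ does not (its literal is false). Since each edge of $G_F$ encodes an implication $a \Rightarrow b$, and an implication is false exactly when its premise is true and its conclusion false, an edge is violated if and only if the corresponding implication evaluates to false under the assignment.

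Next I would establish the key clause-level equivalence. Fix a clause $(v_i \lor v_j)$, which contributes the two edges $(\bar{v_i}, v_j)$ and $(\bar{v_j}, v_i)$. The implication $\bar{v_i} \Rightarrow v_j$ is violated exactly when $v_i$ and $v_j$ are both false, and likewise for $\bar{v_j} \Rightarrow v_i$; hence either of the two edges is violated if and only if both literals of the clause are false, which is exactly the situation in which the clause is \emph{unsatisfied}. Contrapositively, the clause is satisfied if and only if neither of its two edges is violated.

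Finally I would aggregate across clauses. The formula $F$ is satisfied precisely when every clause is satisfied, and by the previous step this holds precisely when no edge arising from any clause is violated. Since every edge of $G_F$ arises from some clause, this coincides with the condition that no edge of $G_F$ is violated at all, i.e., the condition stated in the proposition. Chaining these equivalences yields the biconditional in both directions simultaneously.

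I do not anticipate a serious obstacle here, as the argument is a direct chain of equivalences rather than a construction requiring new ideas. The one point demanding care is the bookkeeping of the two edges generated per clause: I must verify that both edges are violated under exactly the same circumstance, namely that both literals are false, so that the clause-level condition ``neither edge violated'' lines up cleanly with ``clause satisfied.'' The degenerate case of a clause of the form $(v \lor v)$, which contributes the single implication $\bar{v} \Rightarrow v$, is consistent with this reasoning and requires no separate treatment.
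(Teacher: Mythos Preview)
Your argument is correct: the reduction to edge-level ``violations,'' the clause-level check that both contributed edges are violated precisely when both literals are false, and the aggregation across clauses together yield the biconditional cleanly.

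As for comparison with the paper: there is nothing to compare. The paper does not prove Property~\ref{prop1} at all; it simply states the property and attributes it to Aspvall et~al.~\cite{aspvall1979linear}. Your write-up therefore supplies a self-contained justification that the paper omits. The only cosmetic point is that the paper's construction of $G_F$ lists the two edges for a clause $(v_i \lor v_j)$ as $(\bar{v_i}, v_j)$ and $(\bar{v_i}, v_j)$, which is evidently a typo for $(\bar{v_i}, v_j)$ and $(\bar{v_j}, v_i)$; you have used the intended pair, which is correct.
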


\subsection{The Malmberg-Ciesielski algorithm} 
For a complete description of the Malmberg-Ciesieleski algorithm, we refer the reader to the original publication (\cite{malmberg2020two}, Algorithm 1). We focus here on a key aspect of the algorithm, which is to solve a sequence of 2-SAT problems. In this step, we identify the variables to be labeled with the Boolean variables involved in a 2-SAT problem. A truth assignment $T$ for the Boolean variables naturally translates to a labeling $\ell$. For this step of the algorithm, we are given an ordered sequence $\mathcal{C}$ of clauses, ordered by a priority derived from the unary and pairwise terms in Eq.~\ref{eq:maxnorm}. Informally, the algorithm operates as follows:

\begin{itemize}
\item Initialize $F$ to be an empty 2-CNF formula, containing no clauses.
\item For each clause $c$ in $\mathcal{C}$, in order:
\begin{itemize}
\item If $F \land c$ is satisfiable, then set $F \leftarrow F \land c$.
\end{itemize}
\end{itemize}
 
\noindent At all steps of the above algorithm, the formula $F$ remains satisfiable. At the termination of the algorithm, the formula $F$ defines a unique truth assignment $T$ and therefore also a labeling $\ell$. For the specific sequence $\mathcal{C}$ of clauses defined by Malmberg and Ciesieleski, the resulting labeling is guaranteed to globally minimize the objective function in Eq.~\ref{eq:maxnorm}.

In each iteration, we need to determine if $F\land c$ is satisfiable, i.e., solve the 2-SAT problem associated with the formula $F\land c$. Malmberg and Ciesieleski suggest to use Aspvall's algorithm for this purpose, with an asymptotic time complexity of $\mathcal{O}(|F|) \leq \mathcal{O} (|\mathcal{C}|)$. Let $N=n+|\mathcal{N}|$ denote the total number of unary and pairwise terms in Eq.~\ref{eq:maxnorm}. By its design, the number of clauses in the sequence $\mathcal{C}$ is $\mathcal{O}(N)$, leading to the asymptotic time complexity of  $\mathcal{O}(N^2)$ for the Malmberg-Ciesieleski algorithm implemented using Aspvall's algorithm.

\section{Proposed algorithm}
\label{sec:method}

As observed in the previous section, the Malmberg-Ciesieleski algorithm iteratively builds a formula $F$ that remains satisfiable at each step of the algorithm. Our approach for improving the efficiency of the computations is to maintain, at each step of the algorithm, a truth assignment that satisfies the current formula $F$. When trying to determine whether the next clause $c$ in the sequence $C$ can be appended to $F$ without rendering the formula unsatisfiable, we show that this previous truth assignment can be utilized to reduce the computation time. We represent a truth assignment $T$ to the Boolean variables of a 2-SAT problem as a function $T:[1,n]\rightarrow\{0,1\}$, so that $T(i)$ is the value assigned to variable $x_i$. Trivially, if $T$ satisfies $c$ then is also satisfies $F \land c$, so we focus on the case where $T$ does not satisfy the next clause $c$.

We will consider 2-SAT-solving under \emph{assumptions}~\cite{een2003extensible}, i.e., given a satisfiable formula, we ask if the same formula still satisfiable if we assume given values for a subset of the variables? Such assumptions will be represented by a set of vertices in the implication graph -- since each vertex corresponds to a literal, the set of vertices corresponds to a set of literals that are all assumed to evaluate to \emph{true}. We assume that vertex sets used in this context are internally conflict-free, i.e., they do not contain both a vertex and its complement.

Below we will present an efficient algorithm for solving a 2-SAT problem under a set of assumptions $A$, given a truth assignment $T$ that satisfies the formula \emph{without} the assumptions. To see how such a procedure helps us in efficiently implementing the Malmberg-Ciesielski algorithm, we observe that by De Morgan's laws a clause $(v_i \lor v_j)$ can be rewritten as $\neg(\bar{v_i} \land \bar{v_j})$. In this form, it is easier to see that in order to satisfy this clause, the truth assignment $T$ must satisfy exactly one of the expressions $({v_i} \land {v_j})$, $({v_i} \land \bar{v_j})$,  or $(\bar{v_i} \land {v_j})$.  Each of these expressions represent a set of assumptions, and therefore $F\land(v_i \lor v_j)$ is satisfiable if and only if $F$ is satisfiable under one of the following sets of assumptions $A$:  $\{{v_i} , {v_j}\}$, $\{ {v_i} , \bar{v_j}\}$, or $\{\bar{v_i} , {v_j}\}$. We note also that in the special case that $i=j$, the above argument can be simplified further. In this case, the formula reduces to $F\land(v_i)$ which is equivalent to solving $F$ under the assumption $A =\{v_i\}$. 

The procedure listed in Algorithm~\ref{addifsat} utilizes this result to perform the inner loop of the Malmberg-Ciesieleski algorithm: It determines whether a given clause can be added to a satisfiable formula without making it unsatisfiable. If so, it updates an implication graph representing the formula to include the new clause. Algorithm~\ref{addifsat} utilizes a procedure \emph{SolveWithAssumptions}, which we will now describe.

\LinesNumbered
\SetEndCharOfAlgoLine{}
\begin{algorithm}[tb]
\SetKw{remove}{remove}
\SetKw{from}{from}

\KwIn{An implication graph $G$ representing a 2-SAT problem. A clause $c=(v_i)\lor(v_j)$. A truth assignment $T$ that satisfies the formula $F$ encoded by $G$. }
  
\KwResult{A truth value indicating if $F\land c$ is satisfiable. If it is, then $T$ is a truth assignment satisfying $F \land c$ and $G$ encodes $F \land c$. Otherwise, $T$ and $G$ are unmodified. }

\BlankLine 
Set \emph{satisfiable} $\leftarrow$ \emph{false}\;
\If{$T$ satisfies $c$}{
	Set \emph{satisfiable} $\leftarrow$ \emph{true}\;
}
\Else{
\If{$v_i=v_j$}{
\If{SolveWithAssumptions($G$,$\{ v_i\}$,$T$)}{
	Set \emph{satisfiable} $\leftarrow$ \emph{true}\;
}
}
\Else{ \tcc{$v_i\neq v_j$}
\If{SolveWithAssumptions($G$,$\{ {v_i}, v_j\}$,$T$)}{
	Set \emph{satisfiable} $\leftarrow$ \emph{true}\;
}

\ElseIf{SolveWithAssumptions($G$,$\{ \bar{v_i}, v_j\}$,$T$)}{
	Set \emph{satisfiable} $\leftarrow$ \emph{true}\;
}

\ElseIf{SolveWithAssumptions($G$,$\{ v_i, \bar{v_j}\}$,$T$)}{
	Set \emph{satisfiable} $\leftarrow$ \emph{true}\;
}
}
}

\If{$satisfiable$}{
	Add edges $(\bar{v_i}, v_j)$ and $(\bar{v_j}, v_i)$ to $G$\;
}

Return \emph{satisfiable}

\caption{CheckSolvable($G$,$C$,$T$)} \label{addifsat}
\end{algorithm}

\LinesNumbered
\SetEndCharOfAlgoLine{}
\begin{algorithm}[tb]
\SetKw{remove}{remove}
\SetKw{from}{from}

\KwIn{An implication graph $G$ representing a 2-SAT problem. A set of assumtions $A$, without internal conflicts. A truth assignment $T$ that satisfies the formula $F$ encoded by $G$. }
  
\KwResult{A truth value indicating the existence of a truth assignment $T'$ that satisfies the formula $F$ encoded by $G$ while simultaneously satisfying the assumptions $A$. If the algorithm returns \emph{true}, then $T$ is a truth assignment satisfying this criterion. Otherwise, $T$ is unmodified. }

\KwData{ A FIFO (or LIFO) queue $Q$ of vertices; A set of vertices $C$. }  

Set $C \leftarrow \emptyset$

\ForEach{$v \in A$}{
	Insert $v$ in $Q$\;
	Insert $v$ in $C$\; 
}

\While{$Q$ is not empty}{
	Pop a vertex $v$ from $Q$\;
	\If{$v$ disagrees with $T$}{
		\ForEach{vertex $w$ such $v$ has an outgoing edge to $w$ }{
			\If{$\bar{w}\in C$}{
				Return \emph{false} and exit\;
			}
			\ElseIf{$w\notin C$ }{
				Insert $w$ in $Q$\;
				Insert $w$ in $C$\; 
			}

		}
	}
}
\ForEach{vertex $v \in C$}{
	Set value of $T$ for the variable corresponding to $v$ so that it agrees with $v$.\;
}
Return \emph{true}
\caption{SolveWithAssumptions($G$,$A$,$T$)} \label{solve_under_assumptions}
\end{algorithm}

Let $F$ be a formula with corresponding implication graph $G_F = (V,E)$, let $T$ be a truth assignment for the variables associated with $F$, and let $A$ be a set of assumptions. We define $R_{A,T}\subseteq{V}$ as the set of vertices that are reachable in $G_F$ from any vertex in $A$ without traversing an edge that is outgoing from a vertex that agrees with $T$. The main theoretical result that enables our proposed algorithm is summarized in the following theorem:

\begin{theorem}
Assume that $F$ is satisfiable. Let $T$ be a truth assignment that satisfies $F$, and let $A$ be a set of assumptions. Then $F$ is satisfiable under the assumptions A if and only if the subgraph $R_{A,T}$ does not contain a pair of complementing vertices.
\end{theorem}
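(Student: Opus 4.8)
The plan is to prove the two implications separately, using Property~\ref{prop1} together with the skew symmetry of $G_F$ as the two main tools. Throughout I will identify a vertex with its literal and use the following restatement of the definition: $v \in R_{A,T}$ exactly when $v$ is reachable from $A$ along a directed path each of whose edges emanates from a vertex that \emph{disagrees} with $T$. This is precisely the set $C$ computed by the search in Algorithm~\ref{solve_under_assumptions}, which expands outgoing edges only from vertices disagreeing with $T$ and reports a conflict exactly when some $v$ and $\bar v$ both enter $C$; so proving the theorem also certifies that algorithm.

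For the ``only if'' direction I would assume $F$ is satisfiable under $A$, witnessed by an assignment $T'$ that satisfies $F$ and agrees with every vertex of $A$, and show by induction on the search order that \emph{every} vertex of $R_{A,T}$ agrees with $T'$. The base case is immediate since $A \subseteq R_{A,T}$ and $T'$ agrees with $A$. For the inductive step, a vertex $w$ enters $R_{A,T}$ only through an edge $(v,w)$ whose source $v$ is already in $R_{A,T}$; by hypothesis $v$ agrees with $T'$, and since $T'$ satisfies $F$, Property~\ref{prop1} forbids an edge from a $T'$-agreeing vertex to a $T'$-disagreeing one, so $w$ must agree with $T'$ as well. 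As $T'$ is a single consistent assignment, it cannot agree with both a literal and its complement, so $R_{A,T}$ contains no complementing pair. (Notably this direction never uses the $T$-restriction on the search — it would hold for the full forward-reachable set.)

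For the ``if'' direction, the natural construction is to define $T'$ to agree with every vertex of $R_{A,T}$ and to coincide with $T$ on all variables having neither literal in $R_{A,T}$; the absence of a complementing pair is exactly what makes $T'$ well defined, and $A \subseteq R_{A,T}$ makes $T'$ satisfy the assumptions. It then remains to verify via Property~\ref{prop1} that $T'$ satisfies $F$, which I would do by contradiction: suppose there is an edge $(u,w)$ with $u$ agreeing and $w$ disagreeing with $T'$, and split on whether $u$ disagrees with $T$. If $u$ disagrees with $T$, then since $T'$ differs from $T$ only on variables with a literal in $R_{A,T}$, the fact that $u$ agrees with $T'$ but disagrees with $T$ forces $u \in R_{A,T}$; the search therefore expanded $u$'s outgoing edges, putting $w \in R_{A,T}$, so $T'$ agrees with $w$ — a contradiction. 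The remaining case has $u$ agreeing with $T$; then, because $T$ itself satisfies $F$, Property~\ref{prop1} gives that $w$ also agrees with $T$, so $T'$ can disagree with $w$ only if $\bar w \in R_{A,T}$.

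This last possibility is where I expect the real work, and it is resolved by skew symmetry. Since $w$ agrees with $T$, its complement $\bar w$ disagrees with $T$, so once $\bar w \in R_{A,T}$ the search expands $\bar w$; and the skew-symmetric companion of $(u,w)$ is the edge $(\bar w,\bar u)$, forcing $\bar u \in R_{A,T}$. But $\bar u \in R_{A,T}$ means $T'$ is defined to agree with $\bar u$, i.e.\ to \emph{disagree} with $u$, contradicting that $u$ agrees with $T'$. Hence no offending edge exists, and Property~\ref{prop1} certifies that $T'$ satisfies $F$ under $A$. The conceptual crux is exactly this interplay: restricting the search to edges leaving $T$-disagreeing vertices is what keeps the forced-true set $R_{A,T}$ closed under the implications a flipped assignment must respect, and skew symmetry is what lets a would-be violation at $w$ be pulled back to a self-contradiction at $u$.
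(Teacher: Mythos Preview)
Your proof is correct and follows essentially the same approach as the paper: both directions rely on Property~\ref{prop1} and skew symmetry, and the ``if'' direction constructs the identical witness $T'$ by flipping $T$ on $R_{A,T}$. Your organization is slightly more streamlined---you collapse the paper's four-way case split on the $T$-values of $v_i,v_j$ into a two-way split on whether $u$ agrees with $T$, absorbing the paper's Case~2 via a direct appeal to Property~\ref{prop1} for $T$---and your ``only if'' direction makes explicit the induction that the paper's one-line argument leaves implicit.
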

\begin{proof}
For the first part of the proof, assume that $R_{A,T}$ does contain a pair of complementing vertices $v_i$ and $\bar{v_i}$. Then the assumptions $A$ directly imply that both $v_i$ and  $\bar{v_i}$ are simultaneously satisfied, which is clearly a contradiction, and so $F$ is not satisfiable under the assumptions $A$.

For the second part of the proof, assume that $R_{A,T}$ does not contain any pair of complementing vertices. We may then construct a well-defined truth assignment $T'$ from the given truth assignment $T$ by setting, for every vertex in $R_{A,T}$, the correponding variable to the corresponding truth value. For any vertex $v_i \notin R_{A,T}$, we have $T(i)=T'(i)$. Furthermore, the truth assignment $T'$ agrees with all assumptions in $A$.  

Next assume, with the intent of constructing a proof by contradiction, that the truth assignment $T'$ constructed above does not satisfy $F$. Then by Property~\ref{prop1} there exists at least one vertex $v_i$ agreeing with $T'$ that has an outgoing edge to a vertex $v_j$ not agreeing with $T'$. We now consider all four possibilities for the truth assignment $T$ with respect to the variables corresponding to $v_i$ and $v_j$ :

\begin{enumerate}
\item Assume that both $v_i$ and $v_j$ agree with $T$. Then since $v_j$ does not agree with $T'$ we must have $\bar{v_j}\in R_{A,T}$, i.e., there exists a path $\pi$ from $A$ to $\bar{v_j}$ that does not traverse an edge outgoing from a vertex that agrees with $T$. By the skew symmetry of the implication graph, there is an outgoing edge from $\bar{v_j}$ to $\bar{v_i}$, and we may thus append this edge to the path $\pi$ to see that $\bar{v_i}$ is also in $R_{A,T}$, contradicting that $v_i$ agrees with $T'$.  Thus, the assumption that both $v_i$ and $v_j$ agree with $T$ leads to a contradiction.
\item Assume that $v_i$ agrees with $T$ but $v_j$ does not. Since $v_i$ has an outgoing edge to $v_j$, this contradicts that $T$ satisfies $F$, and so the assumption that $v_i$ agrees with $T$ but $v_j$ does not agree with $T$ leads to a contradiction.
\item Assume that $v_j$ agrees with $T$ but $v_i$ does not. Then $v_i$ and $\bar{v_j}$ are both in $R_{A,T}$. There is an outgoing edge from $v_i$ to $v_j$, and $v_i$ disagrees with $T$, and thus $v_j$ is also in $R_{A,T}$, contradicting the assumption that $R_{A,T}$ does not contain both a vertex and its complement. Thus, the assumption that $v_j$ agrees with $T$ but $v_i$ does not agree with $T$ leads to a contradiction.
\item Assume that neither $v_i$ nor $v_j$ agree with $T$. Then since $v_i$ agrees with $T'$ we must have $v_i \in R_{A,T}$, i.e.,  there exists a path $\pi$ from $A$ to $v_i$ that does not traverse an edge outgoing from a vertex that agrees with $T$. But since there is an outgoing edge from $v_i$ to $v_j$ and $v_i$ does not agree with $T$, we may append $\pi$ with this edge to see that $v_j$ must also be in $R_{A,T}$, contradicting that $v_j$ disagrees with $T'$. Thus, the assumption that neither $v_i$ nor $v_j$ agree with $T$ leads to a contradiction
\end{enumerate}

The four cases above cover all possible configurations for the thuth values of the variables corresponding to $v_i$ and $v_j$ in the truth assignment $T$, and each case leads to a contradiction. We conclude that the assumption that $T'$ does not satisfy $F$ leads to a contradiction, and thus $T'$ must satisfy $F$. This completes the proof. \qed
\end{proof}

Based on the theorem presented above, we can solve a 2-SAT problem under given assumptions if we can find the set $R_{A,T}$. We observe that for a given set of assumptions, the set $R_{A,T}$ can easily be found in $\mathcal{O}(V+E)$ time using, e.g., breadth-first search. If we, during this breadth-first search, encounter a vertex whose complement is already confirmed to be in $R_{A,T}$, we may terminate the search and return \emph{false}. Pseudocode for this approach is presented in Algorithm~\ref{solve_under_assumptions}. With an upper bound of $\mathcal{O}(V+E)$ for solving each 2-SAT problem, the proposed approach has the same asymptotic time complexity as the approach using Aspvall's algorithm. In practice, however, we will see that the set $R_{A,T}$ is a very small subset of the implication graph, making this approach much faster than running Aspvall's algorithm for every iteration of the Malmberg-Ciesielski algorithm.

\section{Evaluation}
To evaluate the performance of our proposed version of the Malmberg-Ciesielski to the original formulation using Aspvall's algorithm, perform an empirical study emulating a typical optimization scenario in image processing and computer vision. We perform binary labeling of the pixels of a 2D image of size $W \times H$. The neighborhood relation $\mathcal{N}$ is defined by the standard 4-connectivity used in image processing. Values for the unary and pairwise terms are drawn randomly from a uniform distribution. We then compare the computation time of the two implementations, for image sizes varying from $8\times 8$ to $64\times 64$. We only measure the time required for solving the sequence of 2-SAT problems, as this is the only aspect that differs between the implementations. The results are shown in Figure~\ref{fig:comparison}. As the figure shows, the computation time for the implementation based on Aspvall's algorithm increases dramatically with increasing problem size. For an image of size $64\times 64$, the implementation based on Aspvall's algorithm runs in 62 seconds, while the proposed implementation only requires 0.004 seconds for the same computation -- a speedup of more than four orders of magnitude.

To further study the computation time of the proposed implementation with respect to problem size, we perform a separate experiment on images with sizes varying from $128 \times 128$ to $4096 \times 4096$, for which the implementation using Aspvall's algorithm becomes prohibitively slow. The results are shown in Figure~\ref{fig:timing}. As can be seen from the figure the empirical relation between problem size and computation time appears closer to a linear function across this range, rather than quadratic relation suggested by the worst-case asymptotic time complexity.

\begin{figure}[t]
\centering
\includegraphics[width=0.63\textwidth]{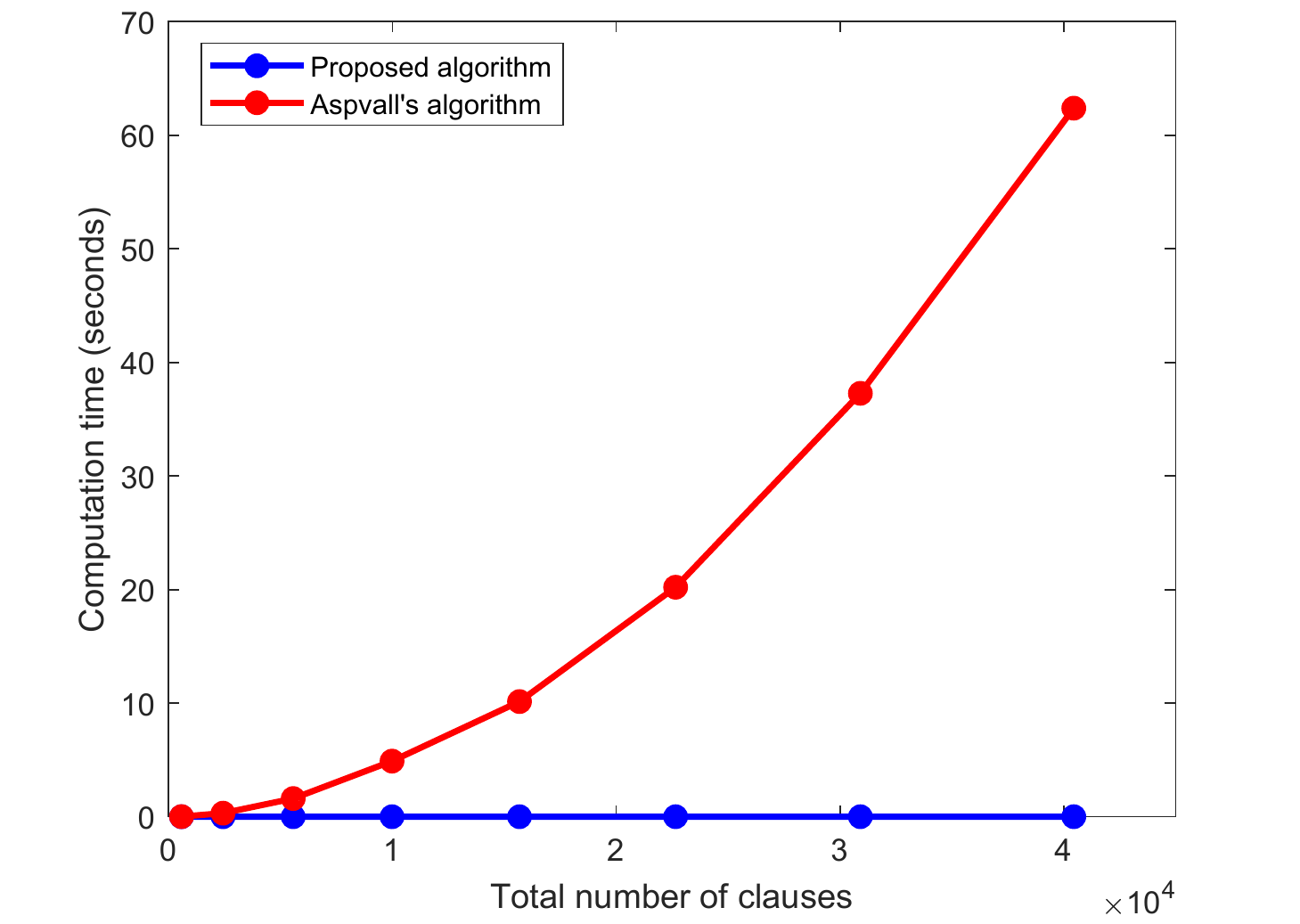}

\caption{Comparison of computation time between the proposed implementation of the Malmberg-Ciesielski method, and the original formulation using Aspvall's algorithm, with respect to the total number of clauses in the 2-SAT sequence.}
\label{fig:comparison}
\end{figure}

\begin{figure}[t]
\centering
\includegraphics[width=0.63\textwidth]{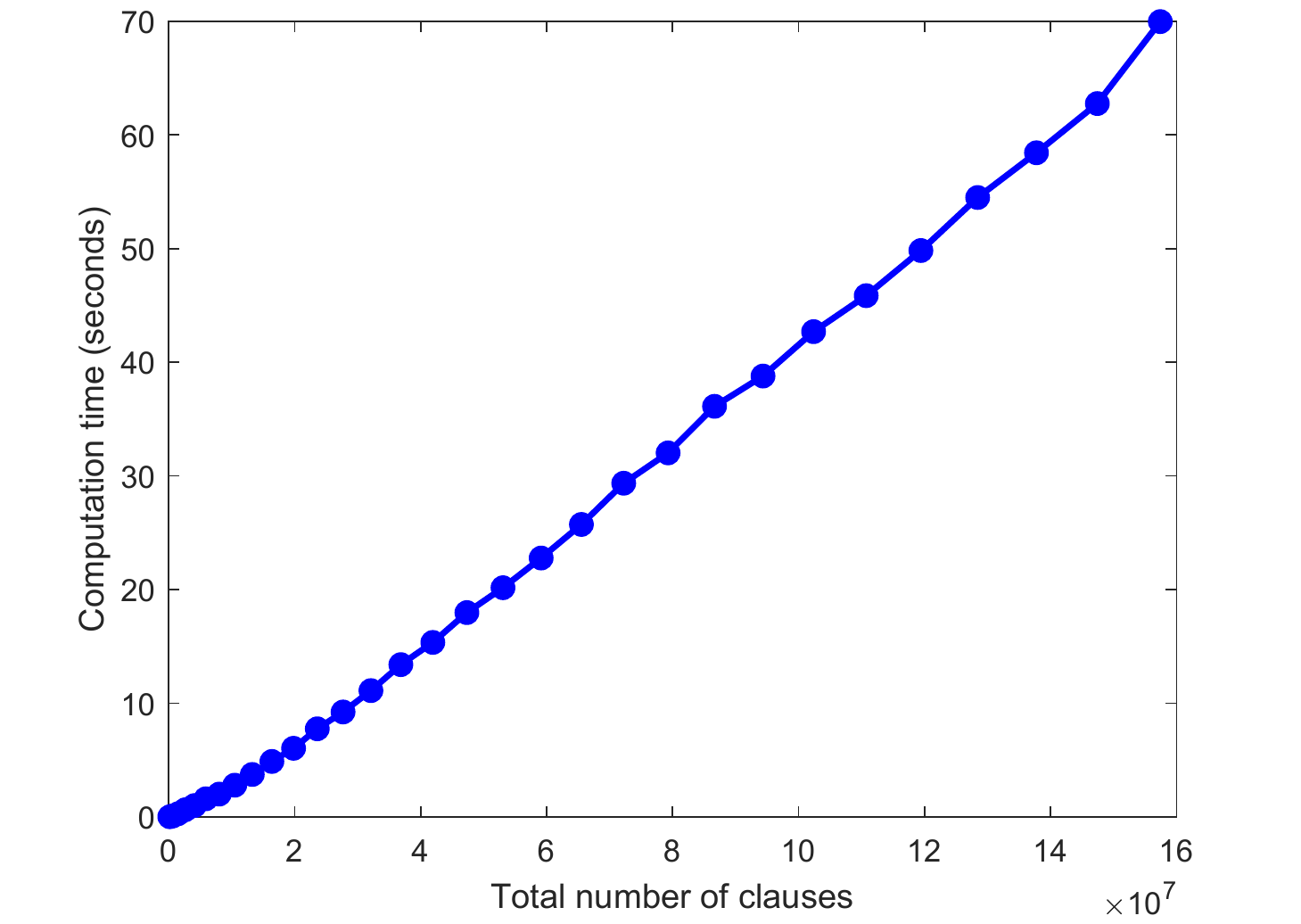}
\caption{Computation time of the proposed implementation in relation to problem size.}
\label{fig:timing}
\end{figure}

\section{Conclusions}
We have proposed a modified, efficient implementation of the Malmberg-Ciesielski method for optimal binary labeling of graphs. While our proposed implementation has the same asymptotic run-time complexity as the original algorithm, we demonstrate that it is orders of magnitude faster in practice. This reduction in computation time makes the Malmberg-Ciesielski method a viable option for many practical applications. 

\subsubsection*{Acknowledgements}
This work was supported by a SPRINT grant (2019/08759-2) from the S\~{a}o Paulo Research Foundation (FAPESP) and Uppsala University. 

%

%
%
%
\bibliographystyle{splncs04}
\bibliography{refs}

\end{document}